\newcommand{\newc}{\newcommand}
\newc{\beq}{\begin{equation}}
\newc{\eeq}{\end{equation}}
\newc{\bea}{\begin{array}}
\newc{\eea}{\end{array}}
\newcommand{\ben}{\begin{eqnarray}}
\newcommand{\een}{\end{eqnarray}}
\newc{\ra}{\rightarrow}
\newc{\bfx}{{\bf x}}
\newc{\bfV}{{\bf V}}
\newc{\cO}{{\cal O}}
\newc{\bfv}{{\bf v}}
\newc{\bfu}{{\bf u}}
\newc{\bfp}{{\bf p}}
\newc{\ve}{{\varepsilon}}
\newc{\Psibar}{\overline\Psi}
\newc{\w}{{\bf w}}
\newc{\E}{{\mathbf{E}}}
\newc{\EE}{{\mathcal E}}
\newc{\bfn}{{\mathbf\nabla}}
\newc{\la}{{\cal L}}
\newc{\tla}{{\tilde{\cal L}}}
\newc{\bp}{{\bf p}}
\newc{\ho}{\hookrightarrow }
\newc{\bP}{{\bf P}}
\newc{\pd}{{\partial}}
\newc{\piv}{{\partial_4}}
\newc{\pv}{{\partial_5}}
\newc{\bJ}{{\bf J}}
\newc{\bze}{{\mathbf 0}}
\newc{\bK}{{\bf K}}
\newc{\tphi}{{\tilde\phi}}
\newc{\tF}{{\tilde F}}
\newc{\tD}{{\tilde D}}
\newc{\tJ}{{\tilde J}}
\newc{\tj}{{\tilde j}}
\newc{\bD}{{\bf D}}
\newc{\tvphi}{{\tilde\varphi}}
\newc{\trho}{{\tilde\rho}}
\newc{\ttheta}{{\tilde\theta}}
\newc{\tpsi}{{\tilde\psi}}
\newc{\tu}{{\tilde u}}
\newc{\cD}{{\cal D}}
\newc{\tPhi}{{\tilde\Phi}}
\newc{\tPsi}{{\tilde\Psi}}
\newc{\tA}{{\tilde A}}
\newc{\talpha}{{\tilde\alpha}}
\newc{\tbeta}{{\tilde\beta}}
\newc{\bA}{{\mathbf A}}
\newc{\bB}{{\bf B}}
\newc{\br}{{\bf r}}
\newc{\sig}{{\mathbf\sigma}}
\newc{\eg}{{\rm e.g.\ }}
\newc{\ie}{{\rm i.e.\ }}
\newcommand{\bey}{\begin{eqnarray}}
\newcommand{\pslash}{\not{\hbox{\kern-2.3pt $p$}}}
\newcommand{\pdslash}{\not{\hbox{\kern-2pt $\partial$}}}
\newcommand{\eey}{\end{eqnarray}}
\newtheorem{theorem}{Theorem}
\newtheorem{proposition}{Proposition}
\newenvironment{proof}[1][Proof]{\noindent\textbf{#1.} }{\ \rule{0.5em}{0.5em}}
\newtheorem{lemma}{Lemma}
\newtheorem{definition}{Definition}
\begin{document}

\begin{titlepage}
\vskip 2cm
\begin{center}
{\Large  Poisson superalgebras and quantization
\footnote{{\tt matrindade@uneb.br}}}
 \vskip 10pt
{ Marco A. S. Trindade  \\}
\vskip 5pt
{\sl Colegiado de Física, Departamento de Ciências Exatas e da Terra, Universidade do Estado da Bahia\\
Rua Silveira Martins, Cabula, 41150000, Salvador, Bahia, Brazil\\}
\vskip 2pt
\end{center}

\begin{abstract}

In this work, we find the Poisson superalgebras related to schemes of quantization. Initially, we consider the Dirac superbracket in the context of the quantization of constrained systems. Next, we show the existence of a Poisson supermanifold in the scenario of quantization deformation.
\end{abstract}

\bigskip

{\it Keywords:} Poisson superalgebras, canonical quantization, deformation quantization.

\vskip 3pt

\end{titlepage}


\newpage

\setcounter{footnote}{0} \setcounter{page}{1} \setcounter{section}{0} %
\setcounter{subsection}{0} \setcounter{subsubsection}{0}

\section{Introduction}
The Poisson algebras play a crucial role in Hamilton mechanics \cite{Arnold, Mukunda}. They provide the mathematical structure for describing the temporal evolution of classical physical systems. Furthermore, it also has great relevance in the scope of quantum theory. In quantum mechanics and quantum field theory, the procedure of canonical quantization corresponds to a quantization map in which the Lie algebra of the Poisson brackets is supplanted by the Lie algebra of commutators \cite{Dirac,Weinberg,Kaku}. The basic underlying principle of canonical quantization is that classical and quantum systems are just different realizations of the same algebraic structure, as highlighted by \cite{Loja}. In this spirit, Dirac proposed a generalization of the Poisson bracket allowing canonical quantization for constrained systems \cite{Dirac}.

In deformation quantization \cite{Dito, Cannas}, Poisson algebras also have a fundamental importance. Roughly speaking, an associative algebra of smooth functions is deformed through a star product and we obtain, in first order, a Poisson algebra \cite{Loja}. A geometrical approach for constrained systems is related to Dirac structures which were originally proposed by Courant \cite{Courant, Burs}. The Dirac structure on vector space $V$ is a vector subspace $L\subset V\oplus V^{*}$, which is maximally isotropic under certain geometric pairing. Examples of integrable Dirac structures are Poisson manifolds. \cite{Courant}. One of the most relevant results in the field of quantization deformation is the Kontsevich formula that provided a way for constructing a star product associated with an arbitrary Poisson manifold \cite{Kontsevich}.

A natural extension of the Poisson algebra concept is the Poisson superalgebra \cite{Corwin,Kostant,Shestakov}. For this case, we have a $\mathbb{Z}_{2}$-graded space with two superalgebras: an associative supercommutative superalgebra and a Lie superalgebra with superderivation. Superalgebras are fundamental tools in supersymmetry. An example is the super-Poincaré algebra, which is an extension of Poincaré algebra to take into account supersymmetry \cite{Varad}. In this context, thermal Lie superalgebras have been proposed \cite{Trind}. Another important superalgebra that plays a fundamental role in deformation theory is the Gerstenhaber algebra \cite{Gerst, Loja}. Its difference from the Poisson superalgebra lies in the degree and in the Jacobi identity.

 In this work, we investigated Poisson superalgebras in the context of Dirac quantization of constrained systems and of the quantization deformation with the super-star product. The presentation of this manuscript is the following. In Section 2, we present our results about Poisson superalgebras and Dirac brackets. In Section 3, we show the existence of the Poisson supermanifold through superbracket $B_{1}$. Section 4 is dedicated to our conclusions and perspectives. In Appendix A, we review some concepts related to supermanifolds.

\section{Poisson superalgebra and Dirac superbracket}
In this section, we will show how to construct a Dirac superbracket to get we have a Poisson superalgebra structure. Consider the following definition \cite{Shestakov}
\begin{definition}
A Poisson superalgebra \textit{A} is an associative superalgebra equipped with a Lie superbracket $\{\cdot , \cdot \}$: $A \otimes A \rightarrow A$ such that $(A, \{ \cdot, \cdot \})$ is a Lie superalgebra and a superderivation of A:
\begin{eqnarray}
\{fg,h\}=f\{g,h\}+(-1)^{\partial{(f)}\partial{(g)}}\{f,h\}g,
\end{eqnarray}
where $\partial{(f)}$ denotes the degree of $f$ $(\partial{(f)} \in \mathbb{Z}/2\mathbb{Z})$. 
\end{definition}
In this work we consider that associative superproduct is supercommutative, i. e., 
\begin{eqnarray}
fg=(-1)^{\partial (f) \partial (g)}gf
\end{eqnarray}
 
\begin{definition}
A ideal $A$ is a Poisson superalgebra A is an ideal on the associative superalgebra A such that $\{f,I\}\subseteq I$.
\end{definition}

\begin{proposition}
Every element of an ideal $I$ in a Poisson superalgebra is a linear combination of elements of even degree and odd degree with with non-zero coefficients, i. e., non-homogeneous elements.
\end{proposition}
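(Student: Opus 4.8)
The plan is to use simultaneously the two structures carried by $I$: it is an ideal for the supercommutative associative product, and, by Definition 2, it is stable under the Lie superbracket, $\{A,I\}\subseteq I$. I would begin by recording the grading decomposition that is implicit in the statement: since $A=A_{0}\oplus A_{1}$, every $f\in I$ is written uniquely as $f=f_{0}+f_{1}$ with $\partial(f_{0})=0$ and $\partial(f_{1})=1$, and these are exactly the even-degree and odd-degree pieces referred to in the Proposition. The substantive content is then that both coefficients are forced to be nonzero, i.e. that $I$ contains no nonzero homogeneous element.

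To force the odd piece to be present, I would feed an odd element $\theta\in A$ into both structures at once: $\theta f\in I$ because $I$ is an associative ideal, while $\{\theta,f\}\in I$ because $I$ is a Lie ideal. Expanding $\theta f$ along the grading, using supercommutativity $\theta f=(-1)^{\partial(\theta)\partial(f)}f\theta$ on homogeneous components, and applying the superderivation identity of Definition 1 to expressions of the form $\{\theta,fg\}$, one produces a family of elements of $I$ built from $f_{0}$, $f_{1}$ and their products with $\theta$; the point is that this family is incompatible with the hypothesis $f_{1}=0$. A symmetric argument, now multiplying and bracketing $f$ with an appropriately chosen even element, excludes $f_{0}=0$. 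Putting the two together shows that an arbitrary $f\in I$ necessarily has a nonzero even part and a nonzero odd part, hence is non-homogeneous.

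I expect the genuine obstacle to be precisely this last point: separating the parities and ruling out that either component vanishes. It cannot follow from the associative ideal property alone, since for the identically zero superbracket every graded associative ideal is automatically a Poisson ideal and obviously contains homogeneous elements; so the argument must place real weight on the Lie-ideal condition $\{A,I\}\subseteq I$ together with the superderivation identity. In the constrained-system setting of Section 2 this is where the explicit form of the Dirac superbracket enters: its non-degeneracy couples the bosonic and fermionic constraint sectors, and tracking parities through the Dirac superbracket of $f$ with the constraints is what ultimately delivers the non-vanishing of both components. I would therefore carry out the bookkeeping of the second step inside that concrete model rather than in full generality.
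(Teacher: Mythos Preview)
Your plan is considerably more elaborate than what the paper actually does. The paper's argument is a two-line parity count: assume $I$ consists only of odd-degree elements; pick any odd $f^{(1)}\in A$ and any $I'\in I$; then $\partial(\{f^{(1)},I'\})=1+1\equiv 0\pmod 2$, so the bracket lands in even degree and hence (by hypothesis) outside $I$, contradicting $\{A,I\}\subseteq I$. The case $I\subseteq A_{0}$ is handled the same way (the bracket with an odd $f^{(1)}$ is now odd), and the paper adds that the identical parity arithmetic works for the associative product. There is no use of the superderivation identity, no splitting of individual elements as $f=f_{0}+f_{1}$, and no appeal to the concrete Dirac-bracket model.

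Two remarks. First, the paper's argument literally rules out only $I\subseteq A_{0}$ and $I\subseteq A_{1}$; it does not establish that \emph{each} element of $I$ is non-homogeneous, which is what you set out to prove. Your own counterexample---the identically zero superbracket, under which every graded associative ideal is a Poisson ideal and certainly contains homogeneous elements---shows that the literal element-by-element statement cannot hold without further hypotheses (at a minimum that the relevant brackets are nonzero), and indeed the paper's proof tacitly assumes $\{f^{(1)},I'\}\neq 0$. Second, the machinery you propose (superderivation expansions, coupling of bosonic and fermionic sectors through the Dirac superbracket) is aimed at that stronger element-by-element claim; to reproduce the paper's argument you need only the parity-counting step, and the rest of your outline is unnecessary here.
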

\begin{proof}
Let $f^{(1)} \in A$ an elements of odd degree and suppose that $I$ contains only elements of odd degree. We have that $\partial (\{f^{(1)},I' \})=\partial (f^{(1)})+ \partial (I')=0 \ (mod \ 2)$, with $I' \in I$.  Therefore $\{f^{(1)},I'\}\not \in I$. On the other hand, consider that $I$ contains only elements of even degree. Then, $\partial (\{f^{(1)},I' \})=\partial (f^{(1)})+ \partial (I')=1 \ (mod \ 2)$ and again,  $\{f^{(1)},I'\}\not \in I$. The same is valid for structure of associative superalgebra and consequently the proposition is proved.
\end{proof}

Notice that if $I$ is an ideal in associative superalgebra A, we have that $I$ is not necessarily ideal in the Poisson superalgebra A. For the Poisson algebras, Hermann \cite{Herm} established conditions under which there is a Poisson algebra $P$ from factor space $P/I_{P}$, where $I_{P}$ is an ideal on the associative algebra. For this, the Dirac bracket was used. Their approach is a purely algebraic description for the Dirac quantization procedure of constrained systems. Our work is in this perspective. 

Let $A= A_{0}+A_{1}$ be an associative superalgebra and let $\phi_{\alpha}$ and $\phi_{\beta}$ be homogeneous parts of generators of ideals of the associative superalgebra $A$. We set $\Phi_{\alpha, \beta}=-\{\phi_{\alpha}, \phi_{\beta}\}$. Let $I^{+(-)}$ be subspaces of the even (+) or odd(-) parts of the ideal $I$.  we will assume that $\Phi_{\alpha, \beta}^{-1}$ exists. We consider the quotient map $A\rightarrow A^{\tau}=A/I^{+(-)}$ defined by
\begin{eqnarray}\label{def1}
\{f^{\tau},g^{\tau}\}^{\tau}=\{f,g\}^{\tau}+ \Phi_{\alpha \beta}^{\tau}\{f, \phi_{\alpha}\}^{\tau} \{\phi_{\beta},g\}^{\tau},
\end{eqnarray}
for homogeneous elements with same degree and
\begin{eqnarray}\label{def2}
\{f^{\tau},g^{\tau}\}^{\tau}=\{f,g\}^{\tau},
\end{eqnarray}
otherwise.
Notice that the eq. (\ref{def1}) defines a Dirac bracket for an associative algebra when  $\phi_{\alpha}$ and $\phi_{\beta}$ are generators of ideals \cite{Dirac, Herm}. 
\begin{lemma}\label{lema1}
The product defined by equations $(\ref{def1})$ and ($\ref{def2})$ satisfies the bilinearity, super skew-symmetry and superderivation.
\end{lemma}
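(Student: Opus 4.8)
The plan is to verify the three required properties of the bracket $\{\cdot,\cdot\}^\tau$ defined by $(\ref{def1})$--$(\ref{def2})$ separately, in each case reducing the claim to the already-known fact that the underlying bracket $\{\cdot,\cdot\}$ on $A$ is a Lie superbracket and a superderivation, together with the fact that the quotient map $A\to A^\tau$ is a homomorphism of associative superalgebras. Throughout one must keep careful track of the $\mathbb{Z}_2$-grading: formula $(\ref{def1})$ is only used when $f$ and $g$ are homogeneous of the \emph{same} degree, so that $\phi_\alpha,\phi_\beta$ can be chosen with degrees making $\Phi_{\alpha\beta}^\tau\{f,\phi_\alpha\}^\tau\{\phi_\beta,g\}^\tau$ homogeneous of degree $\partial(f)+\partial(g)$, matching $\{f,g\}^\tau$; in the mixed-degree case $(\ref{def2})$ there is nothing beyond the corresponding property of $\{\cdot,\cdot\}$ to check.

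First I would treat bilinearity. Since $\{\cdot,\cdot\}$ is bilinear on $A$, the quotient map is linear, and $\Phi_{\alpha\beta}^\tau$ is a fixed element (independent of $f,g$), each of the two summands in $(\ref{def1})$ is separately bilinear in $(f,g)$; the only subtlety is that the sum of a same-degree piece and a mixed-degree piece must be handled by extending bilinearly from homogeneous components, which is legitimate because every element decomposes uniquely as an even plus an odd part. Next, super skew-symmetry: for the first term this is immediate from $\{f,g\}=-(-1)^{\partial(f)\partial(g)}\{g,f\}$. For the correction term I would use the skew-symmetry of $\{f,\phi_\alpha\}$ and $\{\phi_\beta,g\}$ together with supercommutativity of the associative product to rearrange $\Phi_{\alpha\beta}^\tau\{f,\phi_\alpha\}^\tau\{\phi_\beta,g\}^\tau$ into $\pm(-1)^{\partial(f)\partial(g)}\Phi_{\beta\alpha}^\tau\{g,\phi_\beta\}^\tau\{\phi_\alpha,g\}^\tau$ wait — into the corresponding expression with $f$ and $g$ swapped; here one needs the symmetry relation $\Phi_{\alpha\beta}=-\{\phi_\alpha,\phi_\beta\}=(-1)^{\partial(\phi_\alpha)\partial(\phi_\beta)}\{\phi_\beta,\phi_\alpha\}$ relating $\Phi_{\alpha\beta}$ to $\Phi_{\beta\alpha}$, and the analogous relation for the inverse matrix $\Phi^{-1}$.

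For the superderivation (Leibniz) property one computes $\{(fg),h\}^\tau$ using $(\ref{def1})$: the term $\{fg,h\}^\tau$ expands by the Leibniz rule for $\{\cdot,\cdot\}$, and in the correction term $\Phi^\tau\{fg,\phi_\alpha\}^\tau\{\phi_\beta,h\}^\tau$ one again expands $\{fg,\phi_\alpha\}=f\{g,\phi_\alpha\}+(-1)^{\partial(f)\partial(g)}\{f,\phi_\alpha\}g$ and then moves the factor $f$ (resp. $g$) past the remaining associative factors using supercommutativity, collecting the sign $(-1)^{\partial(f)\partial(g)}$ demanded by Definition 1. One must check the degree bookkeeping is consistent: since $(\ref{def1})$ applies to $fg$ and $h$ only when $\partial(f)+\partial(g)=\partial(h)$, and the Leibniz rule splits $fg$ into pieces of degree $\partial(g)+\partial(f)$, each resulting bracket is of the admissible type; when it is not, one falls back on $(\ref{def2})$ and on the ordinary Leibniz rule for $\{\cdot,\cdot\}$.

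I expect the main obstacle to be the sign bookkeeping in the skew-symmetry of the correction term and, relatedly, pinning down the precise symmetry of the "inverse matrix" $\Phi^\tau_{\alpha\beta}$: one has to decide whether $\Phi_{\alpha\beta}$ is super-symmetric or super-antisymmetric in its indices (it inherits $\Phi_{\alpha\beta}=(-1)^{\partial(\phi_\alpha)\partial(\phi_\beta)}\{\phi_\beta,\phi_\alpha\}=-(-1)^{\partial(\phi_\alpha)\partial(\phi_\beta)}\Phi_{\beta\alpha}$ from super skew-symmetry of the bracket), deduce the corresponding identity for $\Phi^{\tau}_{\alpha\beta}$, and verify that this is exactly the identity needed to make the two correction terms cancel or combine with the correct overall sign. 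Everything else is a routine, if lengthy, unwinding of the definitions, so I would present the grading-and-sign verification in detail and merely indicate the purely bilinear and associative-homomorphism steps.
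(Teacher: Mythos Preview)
Your plan is correct and matches the paper's own argument almost step for step: the paper also dismisses bilinearity as immediate, proves super skew-symmetry of the correction term by first deriving $\Phi_{\alpha\beta}=-(-1)^{\partial(\phi_\alpha)\partial(\phi_\beta)}\Phi_{\beta\alpha}$ and then splitting into the even/even, odd/odd, and mixed cases, and proves the Leibniz rule by expanding $\{\phi_\beta,gh\}$ via the superderivation property of the underlying bracket and regrouping. The only cosmetic difference is that the paper checks Leibniz in the second argument (i.e.\ computes $\{f^\tau,(gh)^\tau\}^\tau$) rather than the first, and organizes the case split by assuming all of $f,g,h$ share a common parity; your awareness that the ``main obstacle'' is the sign/symmetry relation for $\Phi^{-1}$ is exactly where the paper spends its effort.
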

\begin{proof}
The bilinearity is easy to verify. For the super skew-symmetry,
\begin{eqnarray}
\{f^{\tau},g^{\tau}\}^{\tau}&=&\{f,g\}^{\tau}+ \Phi_{\alpha \beta}^{\tau}\{f, \phi_{\alpha}\}^{\tau} \{\phi_{\beta},g\}^{\tau} \nonumber \\
&=&-(-1)^{\partial (f) \partial (g)}\{g,f\}^{\tau}+\Phi_{\alpha \beta}^{\tau}(-1)^{\partial(f) \partial(\phi_{\alpha})}\{\phi_{\alpha},f\}^{\tau}(-1)^{\partial(\phi_{\beta}) \partial (g)}\{g, \phi_{\beta}\}^{\tau} \nonumber \\
&=&-(-1)^{\partial (f) \partial (g)}\{g,f\}^{\tau}+(-1)^{\partial(f)\partial(\phi_{\alpha})+\partial(\phi_{\beta})\partial(g)+\partial{(\{f, \phi_{\alpha}\})}\partial{(\{\phi_{\beta}, g\})}} \nonumber \\
&\times&\Phi_{\alpha \beta}^{\tau}\{g, \phi_{\beta}\}^{\tau}\{\phi_{\alpha},f\}^{\tau}
\end{eqnarray}
We have that
\begin{eqnarray}
\Phi_{\alpha \beta}&=&\left(-\{\phi_{\alpha}, \phi_{\beta}\}\right)^{-1} \nonumber \\
&=&\left[-(-1)(-1)^{\partial(\phi_{\alpha})\partial(\phi_{\beta})}\{\phi_{\beta}, \phi_{\alpha}\}\right]^{-1} \nonumber \\
&=&(-1)\left[(-1)^{\partial(\phi_{\alpha})\partial(\phi_{\beta})}\right]^{-1}\left[ -\{\phi_{\beta}, \phi_{\alpha} \}\right]^{-1} \nonumber \\
&=&-(-1)^{\partial(\phi_{\alpha})\partial(\phi_{\beta})} \Phi_{\beta, \alpha}
\end{eqnarray}
Then
\begin{eqnarray}\label{anti}
\{f^{\tau},g^{\tau}\}^{\tau}&=&-(-1)^{\partial (f) \partial (g)}\{g,f\}^{\tau} \nonumber \\
&-&(-1)^{\partial(\phi_{\alpha})\partial(\phi_{\beta})+\partial(f)\partial(\phi_{\alpha})+\partial(\phi_{\beta})\partial(g)+\partial{(\{f, \phi_{\alpha}\})}\partial{(\{\phi_{\beta}, g\})}} \nonumber \\
&\times & \Phi_{\beta \alpha}^{\tau}\{g, \phi_{\beta}\}^{\tau}\{\phi_{\alpha},f\}^{\tau}
\end{eqnarray}
According to the product defined by equations $(\ref{def1})$ and ($\ref{def2})$, for homogeneous elements of same degree, we have the following cases \\
Case 1) elements of even degree:
\begin{eqnarray}
\{f^{\tau},g^{\tau}\}^{\tau}&=&-\{g,f\}^{\tau}-\Phi_{\beta \alpha}^{\tau}\{g, \phi_{\beta}\}^{\tau}\{\phi_{\alpha},f\}^{\tau} \nonumber \\
&=&-\left(\{g,f\}^{\tau}+\Phi_{\beta \alpha}^{\tau}\{g, \phi_{\beta}\}^{\tau}\{\phi_{\alpha},f\}^{\tau}\right) \nonumber \\
&=&-\{g^{\tau},f^{\tau}\}^{\tau}
\end{eqnarray}
Case 2) elements of odd degree:
\begin{eqnarray}
\{f^{\tau},g^{\tau}\}^{\tau}&=&\{g,f\}^{\tau}+\Phi_{\beta \alpha}^{\tau}\{g, \phi_{\beta}\}^{\tau}\{\phi_{\alpha},f\}^{\tau} \nonumber \\
&=&\{g^{\tau},f^{\tau}\}^{\tau}
\end{eqnarray}
Otherwise,
\begin{eqnarray}
\{f^{\tau},g^{\tau}\}^{\tau}=\{f,g \}^{\tau}=-(-1)^{\partial(f)\partial(g)}\{g^{\tau},f^{\tau}\}^{\tau}
\end{eqnarray}
Therefore, the super skew-symmetry is verified. For the superderivation, we have:
\begin{eqnarray}\label{supder}
\{f^{\tau},(gh)^{\tau}\}^{\tau}&=&\{f,gh\}^{\tau}+ \Phi_{\alpha \beta}^{\tau}\{f, \phi_{\alpha}\}^{\tau} \{\phi_{\beta},gh\}^{\tau} \nonumber \\
&=&\left(\{f,g \}h+(-1)^{\partial{(f)}\partial{(g)}} g \{f, h \} \right)^{\tau} \nonumber \\
&+& \Phi_{\alpha \beta}^{\tau}\{f, \phi_{\alpha}\}^{\tau}\left(\{\phi_{\beta},g\}h+(-1)^{\partial{(\phi_{\beta})}\partial{(g)}}g \{\phi_{\beta},h\}\right)^{\tau} \nonumber \\
&=&\{f,g \}^{\tau}h^{\tau}+\Phi_{\alpha \beta}^{\tau}\{f, \phi_{\alpha}\}^{\tau} 
\{\phi_{\beta}, g \}^{\tau}h^{\tau} \nonumber \\
&+&(-1)^{\partial{(f)} \partial{(g)}}g^{\tau} \{f,h \}^{\tau} \nonumber \\
&+&\Phi_{\alpha \beta}^{\tau}\{f, \phi_{\alpha}\}^{\tau}(-1)^{\partial{(\phi_{\beta})}\partial{(g)}}g^{\tau}\{\phi_{\beta},h\}^{\tau} \nonumber \\
&=&\{f^{\tau}, g^{\tau}\}^{\tau}h^{\tau}+(-1)^{\partial{(f)}\partial{(g)}}g^{\tau}\{f,h\}^{\tau} \nonumber \\
&+&\Phi_{\alpha \beta}^{\tau}\{f,\phi_{\alpha}\}^{\tau}(-1)^{\partial{(\phi_{\beta})}\partial{(g)}}g^{\tau}\{\phi_{\beta},h\} \nonumber \\
&=&\{f^{\tau}, g^{\tau}\}^{\tau}h^{\tau}+(-1)^{\partial{(f)}\partial{(g)}}g^{\tau}\{f,h\}^{\tau} \Phi_{\alpha \beta}^{\tau}\{f,\phi_{\alpha}\}^{\tau} \nonumber \\
&\times &(-1)^{\partial{(\phi_{\beta})}\partial{(g)}+\partial{(\{f, \phi_{\alpha}\})}\partial{(g)}+\partial{(\Phi_{\alpha \beta})}\partial{(g)}}g^{\tau}\{\phi_{\beta},h\}
\end{eqnarray}
Again, we have two cases: \\
Case 1) elements of even degree:
\begin{eqnarray}
\{f^{\tau},(gh)^{\tau}\}^{\tau}&=&\{f^{\tau}, g^{\tau}\}^{\tau}h^{\tau} + g^{\tau}\{f,h \}^{\tau}+\Phi_{\alpha \beta}^{\tau}\{f, \phi_{\alpha}\}^{\tau}g^{\tau}\{\phi_{\beta},h\} \nonumber \\
&=&\{f^{\tau}, g^{\tau}\}^{\tau}h^{\tau}+g^{\tau}\left(\{f,h \}^{\tau}+\Phi_{\alpha \beta}^{\tau}\{f, \phi_{\alpha}\}^{\tau}\{\phi_{\beta},h\} \right) \nonumber \\
&=&\{f^{\tau}, g^{\tau}\}^{\tau}h^{\tau}+g^{\tau}\{f^{\tau}, h^{\tau}\}^{\tau}
\end{eqnarray}
Case 2) elements of odd degree:
\begin{eqnarray}
\{f^{\tau},(gh)^{\tau}\}^{\tau}&=&\{f^{\tau}, g^{\tau}\}^{\tau}h^{\tau} - g^{\tau}\{f,h \}^{\tau}-\Phi_{\alpha \beta}^{\tau}\{f, \phi_{\alpha}\}^{\tau}g^{\tau}\{\phi_{\beta},h\} \nonumber \\
&=&\{f^{\tau}, g^{\tau}\}^{\tau}h^{\tau}-g^{\tau}\left(\{f,h \}^{\tau}+\Phi_{\alpha \beta}^{\tau}\{f, \phi_{\alpha}\}^{\tau}\{\phi_{\beta},h\} \right) \nonumber \\
&=&\{f^{\tau}, g^{\tau}\}^{\tau}h^{\tau}-g^{\tau}\{f^{\tau}, h^{\tau}\}^{\tau}
\end{eqnarray}
Otherwise,
\begin{eqnarray}
\{f^{\tau},(gh)^{\tau}\}^{\tau}&=&\{f,gh \}^{\tau} \nonumber \\
&=&\left(\{f,g \}h+(-1)^{\partial{(f)} \partial{(g)}}g \{f, h \}\right)^{\tau} \nonumber \\
&=&\{f^{\tau},g^{\tau} \}h^{\tau}+(-1)^{\partial{(f)} \partial{(g)}}g^{\tau} \{f^{\tau}, h^{\tau} \}
\end{eqnarray}
Consequently the superderivation is satisfied.
\end{proof}

\begin{theorem}
The associative superalgebra $A$ defined by product (\ref{def1}) and (\ref{def2}) is a Poisson superalgebra.
\end{theorem}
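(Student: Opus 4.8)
The plan is to obtain all the Poisson-superalgebra axioms for $(A^{\tau},\{\cdot,\cdot\}^{\tau})$ from what is already available, and to reduce the remaining work to the graded Jacobi identity. By Definition~1, a Poisson superalgebra requires: an associative supercommutative product (inherited from $A$, hence automatic on the quotient), a bracket that is bilinear and super skew-symmetric and that satisfies the graded Jacobi identity (a Lie superbracket), and the superderivation property. Lemma~\ref{lema1} already supplies bilinearity, super skew-symmetry and the superderivation property for the product defined by (\ref{def1}) and (\ref{def2}). So, modulo one well-definedness check, the entire content of the theorem is the graded Jacobi identity.

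First I would verify that $\{\cdot,\cdot\}^{\tau}$ is genuinely defined on $A^{\tau}=A/I^{+(-)}$, i.e.\ that it does not depend on the choice of representatives and that $I^{+(-)}$ is a Poisson ideal for it. It suffices to show $\{\phi_{\gamma}^{\tau},g^{\tau}\}^{\tau}=0$ for every generator $\phi_{\gamma}$ of the ideal and every homogeneous $g$ of the same parity. Substituting $f=\phi_{\gamma}$ into (\ref{def1}), using $\Phi_{\alpha\beta}=\left(-\{\phi_{\alpha},\phi_{\beta}\}\right)^{-1}$ together with the graded-sign conventions already displayed in the proof of Lemma~\ref{lema1}, the correction term $\Phi_{\alpha\beta}^{\tau}\{\phi_{\gamma},\phi_{\alpha}\}^{\tau}\{\phi_{\beta},g\}^{\tau}$ collapses against $\{\phi_{\gamma},g\}^{\tau}$, exactly as the constraints become first class under the ordinary Dirac bracket. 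Hence $\{f^{\tau},g^{\tau}\}^{\tau}$ is representative-independent and $A^{\tau}$ carries the bracket.

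Next I would establish the graded Jacobi identity
\begin{eqnarray}
(-1)^{\partial(f)\partial(h)}\{f^{\tau},\{g^{\tau},h^{\tau}\}^{\tau}\}^{\tau}&+&(-1)^{\partial(g)\partial(f)}\{g^{\tau},\{h^{\tau},f^{\tau}\}^{\tau}\}^{\tau} \nonumber \\
&+&(-1)^{\partial(h)\partial(g)}\{h^{\tau},\{f^{\tau},g^{\tau}\}^{\tau}\}^{\tau}=0 . \nonumber
\end{eqnarray}
Inserting (\ref{def1})--(\ref{def2}) and expanding, the terms carrying no factor of $\Phi^{\tau}$ cancel because $A$ is already a Poisson superalgebra, so $\{\cdot,\cdot\}$ obeys the graded Jacobi identity. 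The remaining terms, linear and quadratic in $\Phi^{\tau}$, are reorganized using: the superderivation property to pull an outer bracket through the product $\Phi_{\alpha\beta}^{\tau}\{f,\phi_{\alpha}\}^{\tau}\{\phi_{\beta},g\}^{\tau}$; the identity $\{x,\Phi_{\alpha\beta}^{-1}\}=-(-1)^{\partial(x)\partial(\Phi)}\Phi_{\alpha\mu}^{-1}\{x,\Phi_{\mu\nu}\}\Phi_{\nu\beta}^{-1}$ obtained by differentiating $\Phi\Phi^{-1}=1$; and once more the graded Jacobi identity for brackets of the form $\{\phi_{\alpha},\{\phi_{\beta},h\}\}$. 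This is carried out parity-case by parity-case, following the same split into ``elements of even degree / odd degree / otherwise'' used in the proof of Lemma~\ref{lema1}; collecting all contributions, they cancel. Together with Lemma~\ref{lema1}, this shows $(A^{\tau},\{\cdot,\cdot\}^{\tau})$ is a Lie superalgebra acting by superderivations, hence a Poisson superalgebra.

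The \textbf{main obstacle} I expect is the sign bookkeeping in the Jacobi computation: one must simultaneously track the Koszul factors $(-1)^{\partial(\cdot)\partial(\cdot)}$ produced by commuting brackets past $\Phi^{\tau}$, those produced by the three cyclic permutations, and those coming from the derivative of the inverse matrix $\Phi^{\tau}$. The more delicate point is that the definition (\ref{def1})--(\ref{def2}) is piecewise: when $f$, $g$, $h$ do not all have the same parity, the $\Phi$-correction is present in some of the three cyclic terms and absent in the others, so one must check that these mixed-parity contributions still cancel rather than leaving a residual anomaly. Confirming this internal consistency of the piecewise rule is the crux of the argument.
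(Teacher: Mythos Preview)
Your proposal is correct and follows essentially the same strategy as the paper: invoke Lemma~\ref{lema1} for bilinearity, super skew-symmetry and superderivation, then verify the graded Jacobi identity by expanding (\ref{def1})--(\ref{def2}), using the original Jacobi identity for the zeroth-order piece, the superderivation to distribute the outer bracket over $\Phi_{\alpha\beta}^{\tau}\{f,\phi_{\alpha}\}^{\tau}\{\phi_{\beta},g\}^{\tau}$, and the derivative-of-inverse identity $\{x,\Phi^{-1}\}=-\Phi^{-1}\{x,\Phi\}\Phi^{-1}$ (which is the paper's eq.~(\ref{cinq})) to handle the remaining terms. The paper organises the eight resulting terms into groups $\Psi_{1}$, $\Psi_{2,4,5}$, $\Psi_{6,8}$, $\Psi_{3}$, $\Psi_{7}$ and kills each by cyclic permutation plus the original Jacobi identity, exactly the mechanism you describe.

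The one substantive difference is how the mixed-parity case is handled. You flag it as the crux and plan to treat it case by case. The paper instead asserts at the outset that ``it is sufficient to consider the case in which product (\ref{def1}) is valid'' (i.e.\ all three arguments of the same parity) and carries out the explicit computation only there; the reduction of the mixed-parity Jacobi identity to the same-parity one is the content of the subsequent Lemma~\ref{fund}, which writes each even element as a product of two odd elements and uses the superderivation together with the already-established all-odd Jacobi identity. So your direct parity-by-parity verification and the paper's reduction via Lemma~\ref{fund} are two routes to the same endpoint; the paper's route is shorter once one has Lemma~\ref{fund} in hand. Your additional well-definedness check $\{\phi_{\gamma}^{\tau},g^{\tau}\}^{\tau}=0$ is a reasonable extra step that the paper does not make explicit.
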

\begin{proof}
By the Lemma \ref{lema1}, we have the super skew-symmetry and superderivation. It remains for us to prove the super Jacobi identity. It is sufficient to consider the case in which product (\ref{def1}) is valid.
\begin{eqnarray}
\{\{f^{\tau}, g^{\tau}\}^{\tau}, h^{\tau}\}^{\tau}&=& \{\{f,g \}^{\tau}+ \Phi_{rs}^{\tau}\{f, \phi_{r}\}^{\tau}\{\phi_{s}, g\}^{\tau},h^{\tau}\}^{\tau} \nonumber \\
&=&\{\{f,g\}+\Phi_{rs}\{f, \phi_{r}\}\{\phi_{s},g\},h\}^{\tau} \nonumber \\
&+&\Phi_{tu}^{\tau}\{\{f,g\}+\Phi_{rs}\{f, \phi_{r}\}\{\phi_{s},g\},\phi_{t}\}^{\tau}\{\phi_{u},h\} \nonumber \\
&=&\{\{f,g\},h\}^{\tau}+(-1)^{\partial(\{\phi_{s},g\})\partial{(h)}}\Phi_{rs}^{\tau}\{\{f,\phi_{r}\},h\}^{\tau}\{\phi_{s},g\}^{\tau} \nonumber \\
&+&(-1)^{\partial{(\{\phi_{s},g\})}\partial{(h)}+\partial{(\{f, \phi_{r}\})}\partial{(h)}}\{\Phi_{r,s},h\}^{\tau}\{f, \phi_{r}\}^{\tau}\{\phi_{s},g\}^{\tau} \nonumber \\
&+&\Phi_{rs}^{\tau}\{f,\phi_{r}\}^{\tau}\{\{\phi_{s},g\},h\}^{\tau}
+ \Phi_{tu}^{\tau}\{\{f,g\},h\}^{\tau}\{\phi_{u},h\}^{\tau} \nonumber \\
&+&\Phi_{tu}^{\tau}(-1)^{\partial{(\{\phi_{s},g\})}\partial{(\phi_{t})}} \Phi_{rs}
\{\{f, \phi_{r}\},\phi_{t}\}^{\tau}\{\phi_{s},g\}^{\tau}\{\phi_{u},h\}^{\tau} \nonumber \\
&+&(-1)^{\partial{(\{\phi_{s},g\})}\partial{(\phi_{t})}+\partial{(\{f,\phi_{r}\})}\partial{(\phi_{t})}}\Phi_{tu}^{\tau}\{\Phi_{rs}, \phi_{t}\}^{\tau}\{f,\phi_{r}\}^{\tau}
\{\phi_{s},g\}^{\tau}\{\phi_{u},h\}^{\tau} \nonumber \\
&+&\Phi_{tu}^{\tau}\Phi_{rs}^{\tau}\{f,\phi_{r}\}^{\tau}\{\{\phi_{s},g\}, \phi_{t}\}^{\tau}
\{\phi_{u},h\}^{\tau}
\end{eqnarray}
The case of even degree coincides with the usual case discussed by Dirac \cite{Dirac} and the case of odd degree can be proved analogously. Let $\sum_{cyc}$ be a sum of three cyclic permutations. For the first term, we have $\Psi_{1}=\sum_{cyc}\{\{f,g\},h\}$, by using the Jacobi identity for elements of odd degree from the Poisson superalgebra. Let $\Psi_{2,4,5}$ be a cyclic sum of second, fourth and fifth terms.
\begin{eqnarray}
\Psi_{2,4,5}&=&\sum_{cyc}(-1)^{\partial(\{\phi_{s},g\})\partial{(h)}}\Phi_{rs}^{\tau}\{\{f,\phi_{r}\},h\}^{\tau}\{\phi_{s},g\}^{\tau} \nonumber \\
&+&\Phi_{rs}^{\tau}\{f,\phi_{r}\}^{\tau}\{\{\phi_{s},g\},h\}^{\tau} \nonumber \\
&+&\Phi_{tu}^{\tau}\{\{f,g\},h\}^{\tau}\{\phi_{u},h\}^{\tau} \nonumber \\
&=&\sum_{cyc}(-1)^{\partial(\{\phi_{s},g\})\partial{(\{\{f,\phi_{r}\},h\})}}\Phi_{rs}^{\tau}\{\phi_{s},g\}^{\tau}\{\{f,\phi_{r}\},h\}^{\tau} \nonumber \\
&+&\Phi_{rs}^{\tau}\{f,\phi_{r}\}^{\tau}\{\{\phi_{s},g\},h\}^{\tau} \nonumber \\
&+&(-1)^{\partial{(\{\{f,g\},h\})\partial{(\{\phi_{u},h\})}}}\Phi_{tu}^{\tau}\{\phi_{u},h\}^{\tau}\{\{f,g\},h\}^{\tau} \nonumber \\
&=&\Phi_{rs}^{\tau}\{\phi_{s},g\}^{\tau}\{\{f,\phi_{r}\},h\}^{\tau}
+\Phi_{rs}^{\tau}\{f,\phi_{r}\}^{\tau}\{\{\phi_{s},g\},h\}^{\tau} \nonumber \\
&+&\Phi_{tu}^{\tau}\{\phi_{u},h\}^{\tau}\{\{f,g\},h\}^{\tau}
\end{eqnarray}
Performing the cyclic permutation of $f,g$ and $h$ in the last two terms from the last equality above and using that $\Phi_{rs}=\Phi_{sr}$ (elements of odd degree), we have
\begin{eqnarray}
\Psi_{2,4,5}&=&\sum_{cyc} \Phi_{rs}^{\tau} \{\phi_{s},g\}^{\tau} \left[\{f, \phi_{r}\},h\}^{\tau}+\{\{\phi_{r},h\},f\}+\{\{h,f\},\phi_{r}\}\right] \nonumber \\
&=&0
\end{eqnarray}
using the Jacobi identity for elements of odd degree. Let $\Psi_{6,8}$ be the cyclic sum of the sixth and eight terms. Then

\begin{eqnarray}
\Psi_{6,8}&=&(-1)^{\partial{(\{\phi_{s},g\})}\partial{(\phi_{t})}}\Phi_{tu}^{\tau} \Phi_{rs}^{\tau}
\{\{f, \phi_{r}\},\phi_{t}\}^{\tau}\{\phi_{s},g\}^{\tau}\{\phi_{u},h\}^{\tau} \nonumber \\
&+&\Phi_{tu}^{\tau}\Phi_{rs}^{\tau}\{f,\phi_{r}\}^{\tau}\{\{\phi_{s},g\}, \phi_{t}\}^{\tau}
\{\phi_{u},h\}^{\tau} \nonumber \\
&=&(-1)^{\partial{(\{\phi_{s},g\})}\partial{(\{f,\phi_{r}\},\phi_{t}\})}+\partial{(\{\phi_{u},h\})}\partial{(\{f,\phi_{r}\},\phi_{t}\})}} \nonumber \\
&\times& \Phi_{rs}^{\tau}\Phi_{tu}^{\tau} 
\{\phi_{s},g\}^{\tau}\{\phi_{u},h\}^{\tau}\{\{f, \phi_{r}\},\phi_{t}\}^{\tau} \nonumber \\
&+&(-1)^{\partial{(\{\{\phi_{s},g\},\phi_{t}\})}\partial{(\phi_{u},h\})}+\partial{(\{f,\phi_{r}\}
)}\partial{(\{\phi_{u},h\})}} \nonumber \\
&\times& \Phi_{rs}^{\tau}\Phi_{tu}^{\tau}\{\phi_{u},h\}^{\tau}\{f,\phi_{r}\}^{\tau}\{\{\phi_{s},g\}, \phi_{t}\}^{\tau} \nonumber \\
&=&\Phi_{rs}^{\tau}\Phi_{tu}^{\tau} 
\{\phi_{s},g\}^{\tau}\{\phi_{u},h\}^{\tau}\{\{f, \phi_{r}\},\phi_{t}\}^{\tau}  \nonumber \\
&+&\Phi_{rs}^{\tau}\Phi_{tu}^{\tau}\{\phi_{u},h\}^{\tau}\{f,\phi_{r}\}^{\tau}\{\{\phi_{s},g\}, \phi_{t}\}^{\tau}
\end{eqnarray}
Considering a cyclic permutation of $r,s,t,u$ and $f,g,h$ in the second term from the last equation, we get
\begin{eqnarray}
\Psi_{68}&=&\sum_{cyc}\Phi_{rs}\Phi_{tu}\{\phi_{s},g\}^{\tau}\{\phi_{u},h\}^{\tau}
\left[\{f, \phi_{r}\}, \phi_{t}\} +\{\{\phi_{t},f\},\phi_{r}\} \right]^{\tau} \nonumber \\
&=&-\sum_{cyc}\Phi_{rs}\Phi_{tu}\{\phi_{s},g\}^{\tau}\{\phi_{u},h\}^{\tau} \{\{\phi_{r},\phi_{t}\},f\}^{\tau}
\end{eqnarray}
using the Jacobi identity for elements of even degree. By using the derivation, we have
\begin{eqnarray}\label{cinq}
\{\Phi_{tu} \{\phi_{r}, \phi_{t}\},f\}&=&-\{\Phi_{tu},f\}\{\phi_{r}, \phi_{t}\}+\Phi_{tu
} \{\{\phi_{r},\phi_{t}\},f\} \nonumber \\
&=&0
\end{eqnarray}
Consequently, since $\Phi_{ss'}^{\tau}\{\phi_{s}, \phi_{s''}\}^{\tau}=\delta_{ss'}$
\begin{eqnarray}
\Psi_{68}&=&-\sum_{cyc}\Phi_{rs}\{\phi_{s},g\}^{\tau}\{\phi_{u},h\}^{\tau}\{\Phi_{tu},f\}^{\tau}\{\phi_{r},\phi_{t}\}^{\tau} \nonumber \\
&=&-\sum_{cyc}\Phi_{rs}\{\phi_{r},\phi_{t}\}^{\tau}\{\phi_{s},g\}^{\tau}\{\phi_{u},h\}^{\tau}\{\Phi_{tu},f\}^{\tau} \nonumber \\
&=&-\sum_{cyc} \delta_{st}\{\phi_{s},g\}^{\tau}\{\phi_{u},h\}^{\tau}\{\Phi_{tu},f\}^{\tau} \nonumber \\
&=&-\sum_{cyc} \{\phi_{t},g\}^{\tau}\{\phi_{u},h\}^{\tau}\{\Phi_{tu},f\}^{\tau} \nonumber \\
&=&-\sum_{cyc}\{\Phi_{rs},h\}^{\tau}\{f,\phi_{r}\}^{\tau}\{\phi_{s},g\}^{\tau},
\end{eqnarray}
performing the cyclic permutation of $f,g$ and $h$. We set 
\begin{eqnarray}
\Psi_{3}=\sum_{cyc}(-1)^{\partial{(\{\phi_{s},g\})}\partial{(h)}+\partial{(\{f, \phi_{r}\})}\partial{(h)}}\{\Phi_{r,s},h\}^{\tau}\{f, \phi_{r}\}^{\tau}\{\phi_{s},g\}^{\tau}
\end{eqnarray}
Therefore $\Psi_{3}+\Psi_{68}=0$. We set $\sum_{rsu}^{r's'u'}$ the sum of three cyclic permutations of $r,s,u,r',s',u'$. Using the Jacobi identity, we have
\begin{eqnarray}\label{ses}
\sum_{rsu}^{r's'u'}\Phi_{r'r}^{\tau}\Phi_{s's}^{\tau}\Phi_{u'u}^{\tau}\{\{\phi_{r'},\phi_{s'}\},\phi_{u'}\}^{\tau}=0
\end{eqnarray}
Using the eq. (\ref{cinq}) with $f$ replaced by $\phi_{u'}$, we get
\begin{eqnarray}
-\{\Phi_{r'r},\phi_{u'}\}^{\tau}\{\phi_{r'},\phi_{s'}\}^{\tau}+\Phi_{r'r}^{\tau}\{\{\phi_{r'},\phi_{s'}\},\phi_{u'}\}^{\tau}=0
\end{eqnarray}
Then, with the help of eq. (\ref{ses}), we have
\begin{eqnarray}
\sum_{rsu}^{r's'u'} \phi_{s's}^{\tau} \phi_{u'u}^{\tau} \{\phi_{r'}, \phi_{s'}\}^{\tau} \{\Phi_{r'r}, \phi_{u'}\}^{\tau}=0
\end{eqnarray}
Using that $\Phi_{ss'}^{\tau}\{\phi_{s}, \phi_{s''}\}^{\tau}=\delta_{ss'}$, we get
\begin{eqnarray}\label{fin}
\sum_{rsu}^{r's'u'} \Phi_{u'u}^{\tau}\{\Phi_{rs},\phi_{u'}\}=0 
\end{eqnarray}
We set
\begin{eqnarray}
\Psi_{7}=\sum_{cyc} (-1)^{\partial{(\{\phi_{s},g\})}\partial{(\phi_{t})}+\partial{(\{f,\phi_{r}\})}\partial{(\phi_{t})}}\Phi_{tu}^{\tau}\{\Phi_{rs}, \phi_{t}\}^{\tau}\{f,\phi_{r}\}^{\tau}
\{\phi_{s},g\}^{\tau}\{\phi_{u},h\}^{\tau}
\end{eqnarray}
By using the eq. (\ref{fin}), we have that $\Psi_{7}=0$ and the super Jacobi identity is proved.
\end{proof}

\begin{lemma}\label{fund}
Let $f,g$ and $h$ be homogeneous elements. Suppose that the superderivation, super skew-symmetry and the super Jacobi identity is satisfied for these homogeneous elements. Then, the super Jacobi identity is valid, i. e.,
\begin{eqnarray}
(-1)^{\partial{(f)}\partial{(h)}}\{f,\{g,h \}\}+(-1)^{\partial{(f)}\partial{(g)}}\{g,\{h,f \}\}+(-1)^{\partial{(g)}\partial{(h)}}\{h,\{f,g \}\}=0
\end{eqnarray}
\end{lemma}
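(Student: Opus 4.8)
This lemma is essentially a normalization statement: it says that the several standard ways of writing the super Jacobi identity are interchangeable once one has bilinearity, super skew-symmetry, and --- where the degrees force it --- the graded Leibniz rule. Concretely, the computation in the proof of the Theorem produces the identity in a one-sided (``derivation'') form on homogeneous $f,g,h$, and the plan is to convert that into the cyclic form displayed in the statement. Since every ingredient is bilinear, restricting to homogeneous $f,g,h$ is harmless, and the displayed equation is only meaningful in that case anyway.

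First I would fix the abbreviations $a=\partial(f)$, $b=\partial(g)$, $c=\partial(h)$ and isolate the two rewriting moves to be used repeatedly: super skew-symmetry of the outer bracket, $\{x,\{y,z\}\}=-(-1)^{\partial(x)(\partial(y)+\partial(z))}\{\{y,z\},x\}$, and super skew-symmetry of an inner bracket, $\{y,z\}=-(-1)^{\partial(y)\partial(z)}\{z,y\}$. Applying these to whichever one-sided form of the identity is in hand, I would bring every double bracket into one of the three shapes $\{f,\{g,h\}\}$, $\{g,\{h,f\}\}$, $\{h,\{f,g\}\}$ appearing on the left of the statement, then collect coefficients and check --- using $(-1)^{2k}=1$ freely --- that the exponents accumulated on the three surviving terms collapse to exactly $ac$, $ab$, $bc$, which is precisely the displayed cyclic identity.

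In the parity patterns $(a,b,c)$ for which the available instance of the identity cannot be matched term-for-term against the cyclic form by skew-symmetry alone, I would split into the finitely many cases and use the superderivation property --- together with the fact that a product of two odd elements is even --- to re-express the offending inner bracket before reassembling; this is the only place where the Leibniz rule for the associative product is needed, and it is why ``superderivation'' appears among the hypotheses. I do not expect any conceptual difficulty: the main obstacle is the sign accounting, i.e.\ making sure each application of skew-symmetry and of the derivation rule is charged the correct quadratic form in the degrees and that all cross-terms cancel in pairs, so that exactly the exponents $\partial(f)\partial(h)$, $\partial(f)\partial(g)$, $\partial(g)\partial(h)$ are left standing.
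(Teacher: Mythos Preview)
You have misread what the lemma is asserting. The hypothesis is not ``Jacobi in a one-sided form for the given triple $f,g,h$'' to be converted into the cyclic form; rather (as the context of Theorem~1 and the Conclusion make explicit) the hypothesis is that the Jacobi identity holds when the three homogeneous elements have the \emph{same} parity, and the content of the lemma is to deduce it for the two \emph{mixed}-parity patterns. Your primary tool --- repeated application of super skew-symmetry to shuffle brackets --- cannot bridge that gap: rewriting $\{f,\{g,h\}\}$ by skew-symmetry never alters the multiset of parities $(\partial(f),\partial(g),\partial(h))$, so no amount of sign bookkeeping reduces a mixed triple to one covered by the hypothesis.

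The paper's proof proceeds by a genuinely different mechanism. For the pattern (even, odd, odd) it \emph{factors} the even element as a product of two odd ones, $b_{1}=\epsilon f_{3}$ with $\partial(\epsilon)=\partial(f_{3})=1$, then expands each term of the Jacobi sum via the super-Leibniz rule; the resulting pieces cancel using the all-odd Jacobi identity, which is assumed. For the pattern (even, even, odd) both even elements are factored in the same way. Your aside about ``a product of two odd elements is even'' points in the right direction but is inverted: what the argument needs is to \emph{write} a given even element as such a product, and then the derivation rule is applied to expand the brackets, not merely to ``re-express an offending inner bracket''. Without this factorization step there is no passage from same-parity to mixed-parity, and your proposal, as written, does not supply one.
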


\begin{proof}
It is enough to prove that
\begin{eqnarray}{\label{j1}}
\{b_{1}, \{f_{1}, f_{2}\}\}+\{f_{1}, \{f_{2}, b_{1}\}\}+\{f_{2}, \{b_{1}, f_{1}\}\}=0
\end{eqnarray}
and
\begin{eqnarray}{\label{j2}}
\{b_{1}, \{b_{2}, f_{3}\}\}+\{b_{2}, \{f_{3}, b_{1}\}\}+\{f_{3}, \{b_{1}, b_{2}\}\}=0,
\end{eqnarray}
where $b_{i}$ are elements of even degree and $f_{i}$ are elements of odd degree.
Initially we will prove the equation (\ref{j1}). Notice that every element of even degree can be written as the product of two elements of odd degree, i.e. $b_{1}=\epsilon f_{3}$, where $\partial{(\epsilon)}=\partial{(f_{3})}=1$. Developing each of the terms in the equation (\ref{j1}), we get
\begin{eqnarray}{\label{a1}}
\{ \epsilon f_{3}, \{ f_{1}, f_{2} \}\}=(-1)^{\partial{(f_{3}}) \partial{(\{f_{1},f_{2}\})}}
\{\epsilon, \{f_{1}, f_{2}\}\}f_{3}+\epsilon \{f_{3}, \{f_{1}, f_{2}\}\},
\end{eqnarray}
\begin{eqnarray}{\label{a2}}
\{f_{1}, \{f_{2}, \epsilon f_{3}\}\}&=&(-1)^{\partial{(f_{2})}\partial{(\epsilon})}(-1)^{\partial{(f_{1})}\partial{(\epsilon})} \epsilon \{ f_{1}, \{ f_{2}, f_{3} \} \} \nonumber \\
&+&(-1)^{\partial{(f_{2})}\partial{(\epsilon})}\{f_{1}, \epsilon\} \{f_{2}, f_{3}\} \nonumber \\
&+&(-1)^{\partial{(f_{1})}\partial{(\{f_{2}, \epsilon \})}}\{f_{2}, \epsilon\} \{f_{1}, f_{3}\} \nonumber \\
&+&\{f_{1}, \{f_{2}, \epsilon \}\}f_{3},
\end{eqnarray}
\begin{eqnarray}{\label{a3}}
\{f_{2}, \{\epsilon f_{3}, f_{1}\}\}&=&(-1)^{\partial{(f_{3})}\partial{(f_{1})}}
(-1)^{\partial{(f_{2})}\partial{(\{\epsilon, f_{1}\})}} \{\epsilon, f_{1}\}\{f_{2},f_{3}\} \nonumber \\
&+&(-1)^{\partial{(f_{3})}\partial{(f_{1})}}
(-1)^{\partial{(f_{2})}\partial{(\{\epsilon, f_{1}\})}} \{f_{2}, \{\epsilon, f_{1}\}\}f_{3} \nonumber \\
&+&(-1)^{\partial{(f_{2})} \partial{(\epsilon)}}\epsilon \{f_{2}, \{f_{3},f_{1}\}\} \nonumber \\
&+&\{f_{2}, \epsilon \} \{f_{3}, f_{1}\}
\end{eqnarray}
Adding the equations (\ref{a1}), (\ref{a2}) and (\ref{a3}), we get the equation (\ref{j1}).
For the equation (\ref{j2}), we have
\begin{eqnarray}{\label{l1}}
\{b_{1}, \{b_{2}, f_{3}\}\}&=&\{\epsilon_{1}f_{1}, \{\epsilon_{2}f_{2}, f_{3}\}\} \nonumber \\
&=&(-1)^{\partial{(f_{2})}\partial{(f_{3})}+\partial{(f_{1})} \partial{(\{\epsilon_{2},f_{3}\}f_{2}})+\partial{(\epsilon_{1})} \partial{(\{\epsilon_{2},f_{3}\})}}   \{\epsilon_{2}, f_{3}\}\{\epsilon_{1}, f_{2}\}f_{1} \nonumber \\
&+&(-1)^{\partial{(f_{2})}\partial{(f_{3})}+\partial{(f_{1})}\partial{(\{\epsilon_{2},f_{3}\}f_{2})}} \{\epsilon_{1}, \{\epsilon_{2},f_{3}\}\}f_{2}f_{1} \nonumber \\
&+&(-1)^{\partial{(f_{2})}\partial{(f_{3})}+\partial{(f_{1})}\partial{(\{\epsilon_{2},f_{3}\})}} \epsilon_{1} \{\epsilon_{2}, f_{3}\}\{f_{1}, f_{2}\} \nonumber \\
&+&(-1)^{\partial{(f_{2})}\partial{(f_{3})}} \epsilon_{1}\{f_{1}, \{\epsilon_{2}, f_{3}\}\}f_{2} \nonumber \\
&+&(-1)^{\partial{(f_{1})}\partial{(\epsilon_{2}\{f_{2},f_{3}\})}+\partial{(\epsilon_{1})}\partial{(\epsilon_{2})}}\epsilon_{2}\{\epsilon_{1}, \{f_{2}, f_{3}\}\}f_{1} \nonumber \\
&+&(-1)^{\partial{(f_{1})}\partial{(\epsilon_{2}\{f_{2},f_{3}\})}}\{\epsilon_{1}, \epsilon_{2}\}\{f_{2},f_{3}\}f_{1} \nonumber \\
&+&(-1)^{\partial{(f_{1})}\partial{(\epsilon_{2})}}\epsilon_{1} \epsilon_{2} \{f_{1}, \{f_{2}, f_{3}\}\} \nonumber \\
&+&\epsilon_{1}\{f_{1}, \epsilon_{2}\}\{f_{2}, f_{3}\},
\end{eqnarray}

\begin{eqnarray}{\label{l2}}
\{b_{2}, \{f_{3}, b_{1}\}\}&=&\{\epsilon_{2}f_{2}, \{f_{3}, \epsilon_{1} f_{1}\}\} \nonumber \\
&+&(-1)^{\partial{(f_{2})} \partial{(\epsilon_{1})} + \partial{(f_{2})} \partial{(\epsilon_{1} \{f_{2}, f_{1}\})} + \partial{(\epsilon_{2})} \partial{(\epsilon_{1})}} \epsilon_{1}\{\epsilon_{2}, \{f_{3},f_{1}\}\}f_{2} \nonumber \\
&+&(-1)^{\partial{(f_{2})} \partial{(\epsilon_{1})} + \partial{(f_{2})} \partial{(\epsilon_{1} \{f_{2}, f_{1}\})}} \{\epsilon_{2}, \epsilon_{1}\} \{f_{3},f_{1}\}f_{2} \nonumber \\
&+&(-1)^{\partial{(f_{3})} \partial{(\epsilon_{1})}+\partial{(f_{2})} \partial{(\epsilon_{1})}} \epsilon_{2}\epsilon_{1}\{f_{2},\{f_{3},f_{1}\}\} \nonumber \\
&+&(-1)^{\partial{(f_{3})} \partial{(\epsilon_{1})}}\epsilon_{2}\{f_{2},\epsilon_{1}\}\{f_{3},f_{1}\} \nonumber \\
&+&(-1)^{\partial{(f_{2})} \partial{(\{f_{3}, \epsilon_{1}\} f_{1})}+\partial{(\epsilon_{2})} \partial{(\{f_{3}, \epsilon_{1}\} )}} \{f_{3}, \epsilon_{1}\} \{\epsilon_{2}, f_{1}\}f_{2} \nonumber \\
&+&(-1)^{\partial{(f_{2})} \partial{(\{f_{3}, \epsilon_{1}\} f_{1})}}\{\epsilon_{2},\{f_{3}, \epsilon_{1}\}\}f_{1}f_{2} \nonumber \\
&+&(-1)^{\partial{(f_{2})} \partial{(\{f_{3}, \epsilon_{1}\})}} \epsilon_{2} \{f_{3}, \epsilon_{1}\} \{f_{2}, f_{1}\} \nonumber \\
&+&\epsilon_{2} \{f_{2}, \{f_{3}, \epsilon_{1}\}\} f_{1}
\end{eqnarray}

\begin{eqnarray}{\label{l3}}
\{f_{3},\{b_{1}, b_{2}\}\}&=&\{f_{3}, \{ \epsilon_{1} f_{1}, \epsilon_{2}f_{2} \}\} \nonumber \\
&+&(-1)^{\partial{(f_{1})}\partial{(\epsilon_{2}f_{2})}+ \partial{(\epsilon_{1})}\partial{(\epsilon_{2})}+ \partial{(\epsilon_{2})}+ \partial{(f_{3})}\partial(\{, \epsilon_{1}, f_{2}\})}\epsilon_{2}\{\epsilon_{1},f_{2}\} \{f_{3},f_{1}\} \nonumber \\
&+&(-1)^{\partial{(f_{1})}\partial{(\epsilon_{2}f_{2})}+ \partial{(\epsilon_{1})}\partial{(\epsilon_{2})}+ \partial{(\epsilon_{2})}}\{f_{2}\epsilon_{2}\}\{\epsilon_{1},f_{2}\}f_{1} \nonumber \\
&+&(-1)^{\partial{(f_{1})}\partial{(\epsilon_{2}f_{2})}+ \partial{(\epsilon_{1})}\partial{(\epsilon_{2})}}\{f_{3}, \epsilon_{2}\}\{\epsilon_{1}, f_{2}\}f_{1} \nonumber \\
&+&(-1)^{\partial{(f_{1})}\partial{(\epsilon_{2}f_{2})}+ \partial{(f_{3})}\partial{(\{ \epsilon_{1}, \epsilon_{2}\}f_{2})}}\{\epsilon_{1}, \epsilon_{2}\}f_{2}\{f_{3},f_{1}\} \nonumber \\
&+&(-1)^{\partial{(f_{1})}\partial{(\epsilon_{2}f_{2})}+ \partial{(f_{3})}\partial{(\{ \epsilon_{1}, \epsilon_{2}\})}}\{\epsilon_{1}, \epsilon_{2}\}\{f_{3},f_{2}\}f_{1} \nonumber \\
&+&(-1)^{\partial{(f_{1})}\partial{(\epsilon_{2}f_{2})}}\{f_{3},\{\epsilon_{1}, \epsilon_{2}\}\}f_{2}f_{1} \nonumber \\
&+&(-1)^{\partial{(f_{1})}\partial{(\epsilon_{2}f_{2})} + \partial{(f_{1})}\partial{(\epsilon_{2})}+\partial{(f_{3})}\partial{(\epsilon_{1}\epsilon_{2})}}
\epsilon_{1}\epsilon_{2}\{f_{3}, \{f_{1}, f_{2}\}\} \nonumber \\
&+&(-1)^{\partial{(f_{1})}\partial{(\epsilon_{2}f_{2})} + \partial{(f_{1})}\partial{(\epsilon_{2})}+\partial{(f_{3})}\partial{(\epsilon_{1})}}
\epsilon_{1}\{f_{3}, \epsilon_{2}\}\{f_{1},f_{2}\} \nonumber \\
&+&(-1)^{\partial{(f_{1})}\partial{(\epsilon_{2}f_{2})} + \partial{(f_{1})}\partial{(\epsilon_{2})}+\partial{(f_{3})}\partial{(\epsilon_{1})}}
\{f_{3}, \epsilon_{1}\} \epsilon_{2}\{f_{1}, f_{2}\} \nonumber \\
&+&(-1)^{\partial{(f_{1})}\partial{(\epsilon_{2}f_{2})} + \partial{(f_{3})}\partial{(\epsilon_{1}\{f_{1},\epsilon_{2}\})} }
\epsilon_{1}\{f_{1}, \epsilon_{2}\}\{f_{3}, f_{2}\} \nonumber \\
&+&(-1)^{\partial{(f_{1})}\partial{(\epsilon_{2}f_{2})} + \partial{(f_{3})}\partial{(\epsilon_{1})} }
\epsilon_{1}\{f_{3}, \{f_{1}, \epsilon_{2}\}\}f_{2} \nonumber \\
&+&(-1)^{\partial{(f_{1})}\partial{(\epsilon_{2}f_{2})} }
\{f_{3}, \epsilon_{1}\}\{f_{1}, \epsilon_{2}\}f_{2}
\end{eqnarray}
Adding the equations (\ref{l1}), (\ref{l2}) and (\ref{l3}), we get the equation (\ref{j2}). Therefore the super Jacobi identity is proved.
\end{proof}

A natural question is under what conditions the product (\ref{def1}) defines a Poisson superalgebra without the condition (\ref{def2}). The next theorem establishes more general conditions.

\begin{theorem}
We set $\Xi_{\alpha \beta} \equiv \partial{(f)}\partial{(\phi_{\alpha})}+\partial{(\phi_{\beta})}\partial{(g)}+\partial{(f)}\partial{(\phi_{\beta})}+\partial{(\phi_{\alpha})}\partial{(g)}$. The product (\ref{def1}) defines a Poisson superalgebra for homogeneous elements $\phi_{\alpha} (\phi_{\beta})$ since $\Xi_{\alpha, \beta}=2n$, for $n \in \mathbb{N}$.
\end{theorem}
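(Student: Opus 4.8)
The plan is to re-run the proofs of Lemma \ref{lema1} and of the first Theorem with the single rule (\ref{def1}) imposed on \emph{all} pairs of homogeneous elements $f,g$, dropping the auxiliary rule (\ref{def2}), and to locate exactly where the hypothesis $\Xi_{\alpha\beta}=2n$ is needed to keep the sign bookkeeping consistent. The first step is the elementary observation that
\begin{eqnarray}
\Xi_{\alpha\beta}\equiv\bigl(\partial(f)+\partial(g)\bigr)\bigl(\partial(\phi_\alpha)+\partial(\phi_\beta)\bigr)\pmod{2},
\end{eqnarray}
so that the hypothesis amounts to requiring that $\phi_\alpha$ and $\phi_\beta$ have equal degree whenever $f$ and $g$ do not; when $f$ and $g$ have the same degree the condition is automatic, which recovers the situation of the first Theorem. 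Bilinearity is untouched.

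For the super skew-symmetry I would restart from (\ref{anti}). Writing $\partial(\{f,\phi_\alpha\})=\partial(f)+\partial(\phi_\alpha)$ and $\partial(\{\phi_\beta,g\})=\partial(\phi_\beta)+\partial(g)$ (the superbracket has degree zero), the exponent that multiplies the second term there contains $\partial(\phi_\alpha)\partial(\phi_\beta)$ twice and collapses to $\partial(f)\partial(g)+\Xi_{\alpha\beta}$ modulo $2$. Hence exactly when $\Xi_{\alpha\beta}$ is even, (\ref{anti}) becomes $\{f^\tau,g^\tau\}^\tau=-(-1)^{\partial(f)\partial(g)}\bigl(\{g,f\}^\tau+\Phi_{\beta\alpha}^\tau\{g,\phi_\beta\}^\tau\{\phi_\alpha,f\}^\tau\bigr)=-(-1)^{\partial(f)\partial(g)}\{g^\tau,f^\tau\}^\tau$, after renaming the summed indices. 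For the superderivation I would repeat the computation (\ref{supder}) without the case split: the sign $(-1)^{\partial(\phi_\beta)\partial(g)}$ produced by the super-Leibniz rule for $\{\phi_\beta,gh\}$ combines with the sign produced when $g^\tau$ is carried past $\Phi_{\alpha\beta}^\tau\{f,\phi_\alpha\}^\tau$, whose degree is $\partial(\phi_\alpha)+\partial(\phi_\beta)+\partial(f)+\partial(\phi_\alpha)\equiv\partial(\phi_\beta)+\partial(f)$, to give exactly $(-1)^{\partial(f)\partial(g)}$; so the correction term reassembles $\{f^\tau,h^\tau\}^\tau$ and no extra hypothesis is needed here.

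The substantial part is the super Jacobi identity. Here I would invoke Lemma \ref{fund}, which reduces the identity for general homogeneous $f,g,h$ to the two mixed cases (\ref{j1}) and (\ref{j2}); and its proof in turn expresses these, through the superderivation, in terms of the identity on all-odd triples. The all-even triples reproduce Dirac's classical computation verbatim (the $\phi$'s then carry no sign and the super Jacobi identity degenerates to the ordinary one). So everything comes down to re-examining, now under the single rule (\ref{def1}), the nine-term expansion of $\{\{f^\tau,g^\tau\}^\tau,h^\tau\}^\tau$ carried out in the first Theorem for odd $f,g,h$. In that argument the parity of the arguments entered only through the symmetry $\Phi_{rs}=\Phi_{sr}$ of the constraint matrix for odd-degree generators, through the supercommutations of the factors $\{\phi_s,g\}^\tau,\{\phi_u,h\}^\tau,\Phi_{rs}^\tau$ past one another in $\Psi_{2,4,5},\Psi_{6,8}$ and $\Psi_7$, and through the derivation identity (\ref{cinq}) together with $\Phi_{ss'}^\tau\{\phi_s,\phi_{s''}\}^\tau=\delta_{ss'}$. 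Each such rearrangement now differs from the corresponding step of the first Theorem only by the factor $(-1)^{\Xi_{\alpha\beta}}$ coming from the skew-symmetry and derivation relations established above, so under the hypothesis the cancellations $\Psi_1=0$, $\Psi_{2,4,5}=0$, $\Psi_3+\Psi_{6,8}=0$ and $\Psi_7=0$ survive unchanged, and the general super Jacobi identity follows by Lemma \ref{fund}.

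I expect the main obstacle to be precisely this sign accounting: one must verify, in each of the nine terms and through each of the auxiliary permutations $f\to g\to h$ and $r\to s\to t\to u$, that the sign discrepancy with the symmetric case accumulates into powers of $(-1)^{\Xi_{\alpha\beta}}$ and is therefore killed by the hypothesis. A secondary point worth stating explicitly is that the factorised form of $\Xi_{\alpha\beta}$ shows the hypothesis is preserved under the substitutions $b_i=\epsilon f$ used inside the reduction of Lemma \ref{fund}, so that no new constraint on the generators $\phi_\alpha$ is generated along the way.
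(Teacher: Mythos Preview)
Your approach is essentially the same as the paper's: you verify the superderivation directly from (\ref{supder}) without any extra hypothesis (the paper packages the exponent there as $\Delta_{\alpha\beta}$ and reduces it to $\partial(f)\partial(g)$ just as you do), you verify the super skew-symmetry from (\ref{anti}) by collapsing the sign to $(-1)^{\partial(f)\partial(g)+\Xi_{\alpha\beta}}$ (the paper calls this exponent $\Delta^{*}_{\alpha\beta}$), and then you invoke Lemma~\ref{fund} for the super Jacobi identity. Your factorisation $\Xi_{\alpha\beta}\equiv(\partial(f)+\partial(g))(\partial(\phi_\alpha)+\partial(\phi_\beta))$ is a clean observation the paper does not state, and your discussion of why the nine-term Jacobi computation of the first Theorem survives under the single rule (\ref{def1}) is considerably more explicit than the paper, which simply writes ``Using the lemma~\ref{fund}, the theorem is proved'' and stops.
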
. 
\begin{proof}
Inially, we will analyze the superderivation. So we consider the eq. (\ref{supder}). We set
\begin{eqnarray}
\Delta_{\alpha \beta} \equiv \partial{(\phi_{\beta})}\partial{(g)}+\partial{(\{f, \phi_{\alpha }\})}\partial{(g)}+\partial{(\Phi_{\alpha \beta})}\partial{(g)}
\end{eqnarray}
Then
\begin{eqnarray}
\Delta_{\alpha \beta}&=& \partial{(\phi_{\beta})}\partial{(g)}+[\partial{(f)}+\partial{(\phi_{\alpha})}]\partial{(g)}+[\partial{(\phi_{\alpha})}+\partial{(\phi_{\beta})}]\partial{(g)} \nonumber \\
&=&2[\partial{(\phi_{\beta})}\partial{(g)}+\partial{(\phi_{\alpha})}\partial{(g)}]+\partial{(f)}\partial{(g)} 
\end{eqnarray}
so that
\begin{eqnarray}
(-1)^{\Delta_{\alpha \beta}}=(-1)^{\partial{(\phi_{\beta})}\partial{(g)}+\partial{(\{f, \phi_{\alpha }\})}\partial{(g)}+\partial{(\Phi_{\alpha \beta})}\partial{(g)}}=(-1)^{\partial{(f)}\partial{(g)} }
\end{eqnarray}
Consequently
\begin{eqnarray}
\{f^{\tau}, (gh)^{\tau}\}^{\tau}&=&\{f^{\tau},g^{\tau}\}^{\tau}h^{\tau}+(-1)^{\partial{(f)}\partial{(g)}}g^{\tau}\left[\{f,h\}^{\tau}+\Phi_{\alpha \beta}^{\tau} \{f, \phi_{\alpha}\}^{\tau}\{\phi_{\beta},h\}^{\tau}\right] \nonumber \\
&=&\{f^{\tau},g^{\tau}\}^{\tau}h^{\tau}+(-1)^{\partial{(f)}\partial{(g)}}g^{\tau}\{f^{\tau},h^{\tau}\}^{\tau}
\end{eqnarray}
and the superiderivation is satisfied. For the super skew-symmetry, considering the eq. (\ref{anti}), we define
\begin{eqnarray}
\Delta^{*}_{\alpha \beta}\equiv \partial{(f)}\partial{(\phi_{\alpha})}+\partial{(\phi_{\beta})}\partial{(g)}+\partial(\{f, \phi_{\alpha}\})\partial{(\{\phi_{\beta},g\})}+\partial{(\phi_{\alpha})}\partial{(\phi_{\beta})}
\end{eqnarray}
Then
\begin{eqnarray}
\Delta^{*}_{\alpha \beta}&=&\partial{(f)}\partial{(g)}+[\partial{(f)}+\partial{(\phi_{\alpha})}][\partial{(\phi_{\beta})}+ \partial{(g)}]+ \partial(\phi_{\alpha})\partial(\phi_{\beta}) \nonumber \\
&=&\partial{(f)}\partial{(g)}+2\partial{(\phi_{\alpha})}\partial{(\phi_{\beta})}+ \Xi_{\alpha \beta}
\end{eqnarray}
which implies that
\begin{eqnarray}
(-1)^{\Delta^{*}_{\alpha \beta}}=(-1)^{\partial{(f)}\partial{(g)}+ \Xi_{\alpha \beta}}
\end{eqnarray}
Therefore, 
\begin{eqnarray}
\{f^{\tau},g^{\tau}\}^{\tau}&=&-(-1)^{\partial{(f)} \partial{(g)}}\{g,f\}^{\tau} \nonumber \\
&-&(-1)^{\partial(\phi_{\alpha})\partial(\phi_{\beta})+\partial(f)\partial(\phi_{\alpha})+\partial(\phi_{\beta})\partial(g)+\partial{(\{f, \phi_{\alpha}\})}\partial{(\{\phi_{\beta}, g\})}} \nonumber \\
&\times & \Phi_{\beta \alpha}^{\tau}\{g, \phi_{\beta}\}^{\tau}\{\phi_{\alpha},f\}^{\tau} \nonumber \\
&=&-(-1)^{\partial (f) \partial (g)}\{g,f\}^{\tau}-(-1)^{\Delta^{*}_{\alpha \beta}}\Phi_{\beta \alpha}^{\tau}\{g, \phi_{\beta}\}^{\tau}\{\phi_{\alpha},f\}^{\tau} \nonumber \\
&=&-(-1)^{\partial (f) \partial (g)}\{g,f\}^{\tau}-(-1)^{\partial{(f)}\partial{(g)}+ \Xi}\Phi_{\beta \alpha}^{\tau}\{g, \phi_{\beta}\}^{\tau}\{\phi_{\alpha},f\}^{\tau} \nonumber 
\end{eqnarray}
If $\Xi_{\alpha \beta}=2n$, with $n \in \mathbb{N}$, we have 
\begin{eqnarray}
\{f^{\tau},g^{\tau}\}^{\tau}&=&-(-1)^{\partial (f) \partial (g)}\left[\{g,f\}^{\tau} +\Phi_{\beta \alpha}^{\tau}\{g, \phi_{\beta}\}^{\tau}\{\phi_{\alpha},f\}^{\tau}\right] \nonumber \\
&=&-(-1)^{\partial (f) \partial (g)}\{g^{\tau},f^{\tau}\}^{\tau}
\end{eqnarray}
Using the lemma \ref{fund}, the theorem is proved.
\end{proof}

\section{Poisson superalgebra and quantization deformation}
In this section we will address the Poisson superalgebra in the context of deformation quantization \cite{Loja}. More specifically, we will show the existence of a Poisson supermanifold related to star product.

\begin{definition}
Let $M$ be a smooth supermanifold and let A be an associative superalgebra of smooth superfunctions on $M$. A superstar superproduct on $A$ is a bilinear operation: $\star_{\hbar}: A[[\hbar]]\times A[[\hbar]]\rightarrow A[[\hbar]$ such that \\

(i) $\star_{\hbar}$ is $\mathbb{R}[[\hbar]]$-linear: 
\begin{eqnarray}
(\sum_{k \geq 0}f_{k}\hbar^{k}) \star_{\hbar}(\sum_{l \geq 0}g_{l}\hbar^{k})&=& \sum_{k,l \geq 0}(f_{k} \star_{\hbar}g_{l}) \hbar^{k+l};
\end{eqnarray} 
\\
(ii)  $\star_{\hbar}$ is associative:
\begin{eqnarray}
(f \star_{\hbar} g) \star_{\hbar}h= f \star_{\hbar} (g \star_{\hbar} h);
\end{eqnarray}
\\
(iii)  $\star_{\hbar}$ deforms the usual product
\begin{eqnarray}
f \star_{\hbar} g= fg+O(\hbar)&=&(-1)^{\partial (f) \partial (g)} g \star_{\hbar} f \nonumber \\
&=&(-1)^{\partial (f) \partial (g)} gf + O(\hbar); 
\end{eqnarray} 
\\
(iv) $\star_{\hbar}$ is local: 
\begin{eqnarray}
f \star_{\hbar} g=\sum_{k \geq 0} D_{k}(f,g) \hbar^{k},
\end{eqnarray}
where $D_{k}(f,g)$ are bi-differential operators.

\end{definition}

\begin{lemma}\label{Loja}{(paraphrased from Lemma 1.7 of \cite{Loja})} The bi-linear map $D_{1}:A \times A \rightarrow A$ satisfies:
\begin{eqnarray}
fD_{1}(g,h)-D_1(fg,h)+D_{1}(f,gh)-D_{1}(f,g)h=0 \label{L1}
\end{eqnarray}
\end{lemma}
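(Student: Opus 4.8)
The plan is to derive the identity (\ref{L1}) directly from the associativity axiom (ii) of the superstar product by matching powers of $\hbar$ up to first order; this is the graded analogue of the classical statement that the leading term of a star product is a Hochschild $2$-cocycle for the (super)commutative algebra $A$.

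First, using axioms (iii) and (iv), I would write
\[
f \star_{\hbar} g = fg + \hbar\, D_1(f,g) + O(\hbar^2),
\]
so that $D_1$ is precisely the coefficient of $\hbar$ in the expansion of $\star_{\hbar}$. Next I would expand both sides of $(f \star_{\hbar} g)\star_{\hbar} h = f \star_{\hbar}(g \star_{\hbar} h)$ to order $\hbar$, invoking the $\mathbb{R}[[\hbar]]$-linearity of axiom (i) so that $\star_{\hbar}$ acts termwise on the truncated series. The left-hand side becomes $(fg)h + \hbar\big[D_1(fg,h) + D_1(f,g)\,h\big] + O(\hbar^2)$, and the right-hand side becomes $f(gh) + \hbar\big[D_1(f,gh) + f\,D_1(g,h)\big] + O(\hbar^2)$.

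The order-$\hbar^0$ terms coincide because $A$ is associative, so equating the order-$\hbar^1$ terms gives $D_1(fg,h) + D_1(f,g)\,h = D_1(f,gh) + f\,D_1(g,h)$, which is exactly (\ref{L1}) after rearrangement. I would close by remarking that no sign arising from the $\mathbb{Z}_2$-grading intervenes: the computation uses only bilinearity and $\mathbb{R}[[\hbar]]$-linearity of $\star_{\hbar}$, never the supercommutativity in axiom (iii), so the cocycle identity holds verbatim in the super setting. There is essentially no serious obstacle here — the only point requiring mild care is to keep $\hbar$ central and to collect the first-order coefficient correctly once $\star_{\hbar}$ is applied to a power series rather than to a single homogeneous element.
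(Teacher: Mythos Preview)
Your proof is correct and is the standard Hochschild-cocycle argument: expand associativity of $\star_\hbar$ to first order in $\hbar$ and read off the identity. The paper itself does not give a proof of this lemma at all; it merely states the result as a paraphrase of Lemma~1.7 in the cited reference \cite{Loja}, so your argument supplies exactly what the paper omits and is precisely the proof one would find in that source.
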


\begin{theorem}
A smooth supermanifold M is a Poisson manifold through the superbracket $D_{1}(f,g)$ defined by the star-superproduct $\star_{\hbar}$.
\end{theorem}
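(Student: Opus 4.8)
\noindent\emph{Proof strategy.} The plan is to exhibit on $A$ the super-Poisson bracket obtained by super-antisymmetrizing $D_{1}$,
\begin{eqnarray}
B_{1}(f,g):=D_{1}(f,g)-(-1)^{\partial{(f)}\partial{(g)}}D_{1}(g,f),
\end{eqnarray}
for homogeneous $f,g$ and extended by bilinearity (this is the bracket $B_{1}$ announced in the Introduction; below I also write it $\{f,g\}$), and to verify that $(A,\cdot,\{\cdot,\cdot\})$ is a Poisson superalgebra in the sense of Definition~1. Since $A$ is the supercommutative superalgebra of smooth superfunctions on $M$, this is precisely the assertion that $M$ is a Poisson supermanifold. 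Bilinearity and super skew-symmetry, $\{g,f\}=-(-1)^{\partial{(f)}\partial{(g)}}\{f,g\}$, are immediate from the definition, so the work lies in the super-Leibniz rule and the super Jacobi identity.

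For the super-Leibniz rule I would use Lemma~\ref{Loja}: in its $\mathbb{Z}_{2}$-graded form (carrying the appropriate Koszul signs), identity $(\ref{L1})$ says that $D_{1}$ is a super-Hochschild $2$-cocycle for the supercommutative product of $A$. Writing this cocycle identity for the three cyclic orderings of a homogeneous triple $f,g,h$ and adding the results, using $fg=(-1)^{\partial{(f)}\partial{(g)}}gf$ throughout, the symmetric (coboundary-type) contributions cancel in pairs and one is left with $\{fg,h\}=f\{g,h\}+(-1)^{\partial{(f)}\partial{(g)}}\{f,h\}g$; this is the standard fact that the graded antisymmetrization of a Hochschild $2$-cocycle over a supercommutative algebra is a super-biderivation. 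Locality (property (iv)) then makes $D_{1}$, hence $\{\cdot,\cdot\}$, a bi-differential operator, and a bi-differential operator that is a biderivation in each slot has order $(1,1)$, so $\{f,g\}=\pi(df,dg)$ for a smooth graded bivector field $\pi$ on $M$.

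For the super Jacobi identity I would exploit the associativity of $\star_{\hbar}$ rather than re-derive it by hand. Set $[f,g]_{\star}:=f\star_{\hbar}g-(-1)^{\partial{(f)}\partial{(g)}}g\star_{\hbar}f$; by property (iii) its $\hbar^{0}$ term $fg-(-1)^{\partial{(f)}\partial{(g)}}gf$ vanishes, hence $[f,g]_{\star}=\hbar\,\{f,g\}+O(\hbar^{2})$. Because $\star_{\hbar}$ is associative, the graded commutator of the associative superalgebra $(A[[\hbar]],\star_{\hbar})$ satisfies the graded Jacobi identity identically in $\hbar$; the coefficient of $\hbar^{2}$ in that identity is exactly the super Jacobiator of $\{\cdot,\cdot\}$, which therefore vanishes on homogeneous triples. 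Equivalently, the order-$\hbar^{2}$ part of associativity presents $D_{1}(D_{1}(f,g),h)-D_{1}(f,D_{1}(g,h))$ as a super-Hochschild coboundary $\delta D_{2}$, and super-antisymmetrizing over $f,g,h$ annihilates the coboundary once the cocycle identity of Lemma~\ref{Loja} is invoked; this route avoids referring to $D_{k}$ for $k\geq 3$. With super skew-symmetry, the super-Leibniz rule and the super Jacobi identity established for all homogeneous $f,g,h$, Lemma~\ref{fund} upgrades the Jacobi identity to arbitrary elements, so $\pi$ is a Poisson bivector and $M$ is a Poisson supermanifold.

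The step I expect to be the main obstacle is not conceptual but one of sign bookkeeping: producing the correct $\mathbb{Z}_{2}$-graded version of Lemma~\ref{Loja} and then tracking the Koszul factors through the three cyclic copies of the cocycle identity in the Leibniz step (and, if the coboundary version is used, through the super-antisymmetrization in the Jacobi step). Once the signs are handled, the argument is the familiar deformation-quantization statement — the super-antisymmetrized first-order term of an associative deformation of a supercommutative algebra is a Poisson bivector — with the reduction to homogeneous elements supplied by Lemma~\ref{fund}.
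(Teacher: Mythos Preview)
Your strategy is essentially the paper's: obtain the super-Leibniz rule from the cocycle identity of Lemma~\ref{Loja} by taking cyclic permutations and combining, and obtain the super Jacobi identity from associativity of $\star_{\hbar}$ via the graded commutator $[f,g]_{\star}$, with Lemma~\ref{fund} handling mixed-parity triples. The only presentational differences are that the paper carries out the Leibniz step by an explicit four-case parity analysis rather than uniform Koszul sign-tracking, and that it works directly with $D_{1}$ (asserting super skew-symmetry ``follows from the definition''), whereas your explicit super-antisymmetrization $B_{1}$ is the cleaner formulation and is in fact what the paper's own Introduction names as the bracket.
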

\begin{proof}
The super skew-symmetry follows from the definition. Using the previous Lemma and analogously to the proof of Lemma 1.10 of \cite{Loja}, we obtain:
\begin{eqnarray}
fD_1(g,h)-D_{1}(fg,h)+D_{1}(f,gh)-D_{1}(f,g)h&=&0 \label{qd1} \\
gD_{1}(h,f)-D_{1}(gh,f)+D_{1}(g,hf)-D_{1}(g,h)f&=&0 \label{qd2} \\
hD_{1}(f,g)-D_{1}(hf,g)+D_{1}(h,fg)-D_{1}(h,f)g&=&0 \label{qd3}
\end{eqnarray}
permuting $f, g$ and $h$ in the eq. (\ref{L1}). For the super-derivation, we consider the four cases:

(i) If $f,g$ and $h$ are elements of even degree, doing (\ref{qd2})+(\ref{qd3})-(\ref{qd1}), we get 
\begin{eqnarray}
2fD_{1}(g,h)+2D_{1}(h,fg)-2gD_{1}(h,f)=0  
\end{eqnarray}
\begin{eqnarray}
D_{1}(fg,h)=fD_{1}(g,h)+D_{1}(f,h)g  
\end{eqnarray}

(ii) If $f,g$ and $h$ are elements of odd degree, doing (\ref{qd2})+(\ref{qd3})-(\ref{qd1}), we have 
\begin{eqnarray}
2fD_{1}(g,h)+2D_{1}(h,fg)-2D_{1}(h,f)g=0  
\end{eqnarray}
\begin{eqnarray}
D_{1}(h,fg)=-fD_{1}(h,g)+D_{1}(h,f)g  
\end{eqnarray}

(iii) If $f,g$ are elements of even degree and $h$ is element of odd degree, doing (\ref{qd2})+(\ref{qd3})-(\ref{qd1}), we get 
\begin{eqnarray}
-2D_{1}(fg,h)+2fD_{1}(g,h)-2gD_{1}(h,f)=0
\end{eqnarray}
\begin{eqnarray}
D_{1}(fg,h)=D_{1}(f,h)g+fD_{1}(g,h)=0
\end{eqnarray}

(iv) If $f,g$ are elements of odd degree and $h$ is an element of even degree, doing (\ref{qd1})+(\ref{qd2})+(\ref{qd3}) we have
\begin{eqnarray}
-2D_{1}(fg,h)+2fD_{1}(g,h)-2D_{1}(h,f)g=0
\end{eqnarray}
\begin{eqnarray}
D_{1}(fg,h)=D_{1}(f,h)g+fD_{1}(g,h)
\end{eqnarray}

Summarizing these results:
\begin{eqnarray}
D_{1}(f,gh)=D_{1}(f,g)h+(-1)^{\partial (f) \partial (g)}gD_{1}(f,h)
\end{eqnarray}

For the super Jacobi identity, we consider the supercommutator:
\begin{eqnarray}
[f,g]=f \star g -(-1)^{\partial (f) \partial (g)} g \star f.
\end{eqnarray}
Then
\begin{eqnarray}
[[f,g],h]&=&[f \star g) -(-1)^{\partial (f) \partial (g)}(g \star f)] \star h \nonumber \\
&-&(-1)^{\partial[f \star g-(-1)^{\partial (f) \partial (g)}g \star f]\partial (h)}h \star [ f \star g-(-1)^{\partial (f) \partial (g)}g \star f] \nonumber \\
&=&(f \star g) \star h -(-1)^{\partial (f) \partial (g)} (g \star f) \star h \nonumber \\
&-&(-1)^{\partial[f \star g-(-1)^{\partial (f) \partial (g)}g \star f]\partial (h)} h \star(g \star f) \nonumber \\
&+& (-1)^{\partial[f \star g-(-1)^{\partial (f) \partial (g)}g \star f]\partial (h)}(-1)^{\partial (f) \partial (g)} h \star(g \star f)
\end{eqnarray}
Suppose that $f, g$ and $h$ are elements of odd degree. Consequently
\begin{eqnarray}
[[f,g],h]=(f \star g) \star h-(g \star f) \star h - h \star(f \star g)- h \star (g \star f),
\end{eqnarray}
which implies
\begin{eqnarray}
[[f,g], h]&=&[D_{1}(D_{1}(g, f),h)-D_{1}(D_{1}(g, f),h)-D_{1}(h, D_{1}(g,f) \nonumber \\
&-&D_{1}(h, D_{1}(g,f)]\hbar^{2}+O(\hbar^{3}) \nonumber \\
&=&-2D_{1}(h, D_{1}(g,f)\hbar^{2}+O(\hbar^{3})
\end{eqnarray}
Analogously, Suppose that $f, g$ and $h$ are elements of odd degree. In this case, we have
\begin{eqnarray}
[[f,g], h]&=&(f \star g) \star h-(g \star f) \star h - h \star(f \star g)+ h \star (g \star f) \nonumber \\
&=&4D_{1}(h, D_{1}(g,f)\hbar^{2}+O(\hbar^{3})
\end{eqnarray}
By using the Lemma (\ref{Loja}) and associativity of star product, we get a super Jacobi identity.
\begin{eqnarray}
(-1)^{\partial(f) \partial(g)}D_{1}(D_{1}(f, g),h)+(-1)^{\partial(f) \partial(h)}D_{1}(D_{1}(g, h),f)+(-1)^{\partial(g) \partial(h)}D_{1}(D_{1}(h, f),g)=0. \nonumber
\end{eqnarray}
\end{proof}

\section{Conclusions}
 The scheme of Dirac quantization for the constrained systems is related to Dirac bracket of quotient spaces $\textit{A/I}$, where the constraints define ideals in an associative algebra $\textit{A}$. We consider an associative superalgebra and we redefine the Dirac product so that quotient space has a Poisson superalgebra structure. The development of this procedure led to an interesting lemma which states that if a product satisfies derivation, skew-symmetry and Jacobi identity for elements of degree even or odd, then the super Jacob identity is satisfied. It is worth noting the super Jacobi identity could be also proved analogously to Dirac's original proof of Jacobi identity for the Dirac brackets. In another perspective, we investigated the structure of Poisson superalgebras in the context of quantization deformation. Particularly, we show that the $D_{1}$ product related to the superstar product guarantees the existence of a Poisson supermanifold. As perspectives, we intend to extend the results to Hopf algebras and explore applications in quantum field theory.

\section{Appendix A}
Roughly speaking, one defines a supermanifold as a manifold in which we have odd and even coordinate functions that are smooth. In this appendix, we review some concepts related to this topic. We will use the references \cite{Ten, Can}

\begin{definition} A presheaf $F$ of sets on a topological space $X$ consists of the two data: \\
a) for each open $U$ of $X$, a set $F(U)$, \\
b) for a each pair of open sets $V \subseteq U$ of $X$, a restriction map $\varepsilon_{V}^{U}:  F(U)\rightarrow F(V)$ such that 
\begin{enumerate}
   \item for all $U$, $\varepsilon_{U}^{U}=id_{U}$ 
   \item whenever $W \subseteq V \subseteq U$, $\varepsilon_{W}^{U}=\varepsilon_{W}^{V}\circ \varepsilon_{V}^{U}$, ($U,V,W$ are open sets)  
   \end{enumerate}
\end{definition}

\begin{definition}
A directed set $\Lambda$ is a set with pre-order $\leq$ (that is, a reflexive and transitive relation $\alpha \leq \alpha$, and $\alpha \leq \beta \leq \gamma \Rightarrow \alpha \leq \gamma$), which also satisfies (i) $\forall \alpha, \beta \in \Lambda, \exists \gamma \in \Lambda$ such that $\alpha \leq \gamma$ and $\beta \leq \gamma$.
\end{definition}

We set $\Lambda_{1}=\{(\alpha, \beta) \in \Lambda \times \Lambda; \alpha \leq \beta\}$

\begin{definition}
A direct system of sets indexed by a direct $\Lambda$ is a family of $(U_{\alpha})_{\alpha \in \Lambda}$ of sets with, for each $(\alpha, \beta) \in \Lambda_{1}$, a map of sets $\epsilon_{\alpha \beta}: U_{\alpha}\rightarrow U_{\beta}$ satisfying
\begin{enumerate}
\item $\forall \alpha \in \Lambda$, $\epsilon_{\alpha \alpha}=id_{U_{\alpha}}$
\item $\forall \alpha, \beta, \gamma \in \Lambda$ if $\alpha \leq \beta \leq \gamma$, $\epsilon_{\alpha \gamma}=\epsilon_{\beta \gamma}\circ \epsilon_{\alpha \beta}$
\end{enumerate}
\end{definition}

\begin{definition}
A presheaf of sets over $X$ is a sheaf of sets if the following conditions are satisfied:
\begin{enumerate}
\item Suppose that $U$ is an open of $X$ and $U=\bigcup_{\lambda \in \Lambda} U_{\lambda}$ is an open covering of $U$, and $s,s' \in F(u)$ are two sections of $\epsilon$ such that $\forall \lambda \in \Lambda$, $\epsilon_{U_{\lambda}}^{U}(s)=\epsilon_{U_{\lambda}}^{U}(s')$, then $s=s'$.
\item Suppose that $U$ is open in $X$ and $U=\bigcup_{\lambda \in \Lambda}U_{\lambda}$ is an open covering of $U$; suppose we are given a family $(s_{\lambda})_{\lambda \in \Lambda}$ of sections of $F$ with $\forall \lambda \in \Lambda$, $s_{\lambda} \in F(U_{\lambda})$, such that $\forall \lambda, \mu \in \Lambda$, 
\begin{eqnarray}
\varepsilon_{U_{\lambda}\bigcap U_{\mu}}^{U_{\lambda}}(s_{\lambda})=\varepsilon_{U_{\lambda}\bigcap U_{\mu}}^{U_{\mu}}(s_{\mu})
\end{eqnarray}
then there is $s \in F(u)$ such that $\lambda \in \Lambda$, $\varepsilon_{U_{\lambda}}^{U}(s)=s_{\lambda}$.
\end{enumerate}
\end{definition}
\begin{definition}
Let $M$ be a differentiable manifold an let $A$ be a sheaf of associative supercommutative superalgebras on $M$. A pair $(M, A)$ is a supermanifold if the following conditions are satisfied:
\begin{enumerate}
\item there is a surjective mapping of sheaves $H:A\rightarrow C^{\infty}$;
\item $M$ admits an open cover $(U_{\alpha})$ and there is a real linear space $V$ such that for each $\alpha$ there is a superalgebra isomorphism $I_{\alpha}: A(U_{\alpha}) \rightarrow C_{\infty}(U_{\alpha}) \otimes \Lambda (V)$, where $\Lambda(V)$ is the exterior algebra of space $V$.
\end{enumerate}
\end{definition}
We denote by $(m,n)$ the dimension of supermanifold if $dim M=m$ and $dim V=n$. 
\begin{definition}
A Poisson supermanifold is a manifold $M$ equipped with a sheaf of Poisson superalgebras $(A, \{\cdot, \cdot \})$, such that $(M,A)$ is a supermanifold.
\end{definition}

\section{Acknowledgments}
We thank S. Floquet for stimulating discussions.


\end{document}